\newtheorem{theorem}{Theorem}
\newtheorem{corollary}{Corollary}
\theoremstyle{definition}
\newtheorem{definition}{Definition}
\newtheorem{remark}{Remark}
\newcommand{\eps}{\varepsilon}
\newcommand{\abs}[1]{\left| #1 \right|}
\newcommand{\norm}[1]{\Vert #1 \Vert}
\newcommand{\NN}{\ensuremath{\mathbb{N}}}
\newcommand{\RR}{\mathbb{R}}
\newcommand{\PP}{\mathbb{P}}
\newcommand{\Gauss}{\mathcal{N}}
\newcommand{\brac}[1]{\left(#1\right)}
\newcommand{\abrac}[1]{\langle#1\rangle}
\DeclareMathOperator{\mean}{\mathbb{E}}
\DeclareMathOperator{\Id}{Id}
\DeclareMathOperator{\circm}{circ}
\begin{document}


\title{Masked Toeplitz covariance estimation}
\author{Maryia Kabanava\thanks{RWTH Aachen University, Lehrstuhl C f{\"u}r Mathematik (Analysis), Pontdriesch 10, 52062 Aachen, Germany}{ } and Holger Rauhut\footnotemark[1]}

\maketitle

\begin{abstract}
The problem of estimating the covariance matrix $\Sigma$ of a $p$-variate distribution based on its $n$ observations arises in many data analysis contexts. While for $n>p$, the classical sample covariance matrix $\hat{\Sigma}_n$ is a good estimator for $\Sigma$, it fails in the high-dimensional setting when $n\ll p$. In this scenario one requires prior knowledge about the structure of the covariance matrix in order to construct reasonable estimators. 
Under the common assumption that $\Sigma$ is sparse, a refined estimator is given by $M\cdot\hat{\Sigma}_n$, where $M$ is a suitable symmetric mask matrix indicating the nonzero entries of $\Sigma$ and $\cdot$ denotes the entrywise product of matrices. 
In the present work we assume that $\Sigma$ has Toeplitz structure corresponding to stationary signals. This suggests to average the sample covariance $\hat{\Sigma}_n$ over the diagonals in order to obtain an estimator $\tilde{\Sigma}_n$ of Toeplitz structure. Assuming in addition that $\Sigma$ is sparse
suggests to study estimators of the form $M\cdot\tilde{\Sigma}_n$. For Gaussian random vectors and, more generally, random vectors satisfying the convex concentration property, our main result bounds the estimation error in terms of $n$ and $p$ and shows
that accurate estimation is indeed possible when $n \ll p$. The new bound significantly generalizes previous results by Cai, Ren and Zhou and provides an alternative proof. 
Our analysis exploits the connection between the spectral norm of a Toeplitz matrix and the supremum norm of the corresponding spectral density function.
\end{abstract}

\section{Introduction}



\subsection{Masked covariance estimation}

Estimating the covariance matrix of a random vector $X$ in $\RR^p$ from $n$ i.i.d.~sample observations $X_1,\hdots,X_n$ 
plays a key role in various data analysis tasks. Recently, the case $n \ll p$ of small sample size has attracted increasing attention due to its appearance in applications including mobile communication problems, gene expression studies and more. 


Let $X$ be random vector in $\RR^p$ which we assume to have mean zero throughout this article. 
(The general case of non-zero mean can be handled as in \cite[Remark 4]{LevinaVershynin}.) Its covariance matrix is defined as $\Sigma=\mean XX^T$. 
The sample covariance matrix of a sequence of $n$ i.i.d. observations $X_1,\hdots, X_n$ of $X$
is defined by
\begin{equation}\label{eq:SampleCovarianceMatrix}
\hat{\Sigma}_n=[\hat\sigma_{st}]_{s,t=1}^p=\frac{1}{n}\sum_{i=1}^n X_iX_i^T
\end{equation}
and it is an unbiased estimator of $\Sigma$. If $X$ is Gaussian and $n \geq C \varepsilon^{-2} p$ then the estimation error in the spectral norm satisfies 
\[
\|\hat{\Sigma}_n - \Sigma\| \leq \varepsilon
\]
 with probability at least $1-2 \exp(-cn)$, 
see e.g.~\cite[Corollarly 5.50]{Vershynin}, or \cite[Corollary 5.50]{Vershynin} 
for a variant for heavy-tailed distributions. Since the rank of $\hat{\Sigma}_n$ is at most $n$, the given bound of $n$ in terms of $p$ cannot be improved
for general $\Sigma$, i.e., a sample size of $n \geq p$ is necessary.
 
However, in modern applications it is desirable to find good estimators of the covariance matrix $\Sigma$ when $n\ll p$. Such estimators reflect prior knowledge about the structure of $\Sigma$. A common assumption is that $\Sigma$ is sparse, i.e., a significant amount of entries of $\Sigma$ is $0$ or close to $0$. Then the so-called masked covariance estimator is defined as $M\cdot\hat{\Sigma}_n$, where $M$ is a symmetric mask matrix and $\cdot$ denotes the entrywise product of matrices. Each entry $m_{ij}$ of $M$ indicates how important it is to estimate the interaction between the $i$-th and $j$-th variable. The masked approach was first introduced in \cite{LevinaVershynin} and it allows to describe several regularization techniques such as banding or tapering of the covariance matrix in the case of ordered variables \cite{BickelLevina,CaiRenZhou,FurrerBengtsson}, and thresholding in the case of unordered variables \cite{BickelLevinaThresholding,CaiZhou,ElKaroui}.

The accuracy of the masked estimator can be analyzed by splitting it into two terms via the triangle inequality 
\begin{equation}\label{eq:BiasVariance}
\norm{M\cdot\hat{\Sigma}_n-\Sigma}\leq\norm{M\cdot\hat{\Sigma}_n-M\cdot\Sigma}+\norm{M\cdot\Sigma-\Sigma},
\end{equation}
where $\norm{\cdot}$ denotes the spectral norm of a matrix. The bias term $\norm{M\cdot\Sigma-\Sigma}$ describes how well $\Sigma$ fits the model described by $M$. The variance term $\norm{M\cdot\hat{\Sigma}_n-M\cdot\Sigma}$ measures how accurately the part of the sample covariance matrix approximates the corresponding part of the true covariance matrix. The intuition behind the use of $M$ is that $M\cdot\Sigma$ preserves the essential structure of $\Sigma$, but at the same time $M\cdot\hat{\Sigma}_n$ does not deviate too much from its mean. 

In \cite{LevinaVershynin} the authors considered a $p$-variate Gaussian distribution and studied the problem of estimating the variance term $\norm{M\cdot\hat\Sigma_n-M\cdot\Sigma}$ for an arbitrary fixed symmetric $M\in\RR^{p\times p}$.
\begin{theorem}\label{th:LevinaVershyninResult}
Let $X_1,\ldots,X_n$ be drawn from a multivariate Gaussian distribution $\Gauss(0,\Sigma)$. Let $M\in\RR^{p\times p}$. Then
\begin{equation}\label{eq:SymmetricMaskGaussian}
\mean\norm{M\cdot\hat{\Sigma}_n-M\cdot\Sigma}\leq C\log^3(2p)\brac{\frac{\norm{M}_{1,2}}{\sqrt n}+\frac{\norm{M}}{n}}\norm{\Sigma},
\end{equation}
where $\norm{M}_{1,2}=\max\limits_{j}\brac{\sum\limits_i m_{ij}^2}^{1/2}$.
\end{theorem}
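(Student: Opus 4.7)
My plan is to reduce the spectral norm bound to the supremum of a Rademacher/Gaussian chaos indexed by pairs of unit vectors, and then to control this supremum by a two-level (mixed subgaussian / sub-exponential) chaining argument on the product sphere $\Sphere^{p-1}\times\Sphere^{p-1}$. First, since $M\cdot(X_iX_i^T-\Sigma)$ are i.i.d.\ centered matrix-valued random variables, the standard symmetrization inequality yields
\[
\mean\bigl\|M\cdot\hat{\Sigma}_n - M\cdot\Sigma\bigr\|\leq\frac{2}{n}\,\mean\Bigl\|\sum_{i=1}^n\eps_i\,M\cdot X_iX_i^T\Bigr\|,
\]
with independent Rademacher signs $\eps_i$, and the spectral norm on the right equals $\sup_{u,v\in\Sphere^{p-1}} u^T(\cdot)v$, reducing the problem to bounding the supremum of a scalar random process.

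Next, I would expose the quadratic chaos structure. Writing $X_i=\Sigma^{1/2}g_i$ with $g_i\sim\Gauss(0,I_p)$ and using the Hadamard identity $u^T(M\cdot A)v=\tr(D_u M D_v A^T)$ with $D_u=\mathrm{diag}(u)$, one obtains
\[
u^T\Bigl(\sum_i\eps_i M\cdot X_iX_i^T\Bigr)v = \sum_{i=1}^n\eps_i\,g_i^T T(u,v)\,g_i,
\]
where $T(u,v)=\tfrac12\Sigma^{1/2}(D_uMD_v+D_vM^TD_u)\Sigma^{1/2}$. This displays the inner norm as the supremum over $(u,v)\in\Sphere^{p-1}\times\Sphere^{p-1}$ of a second-order chaos in the independent families $(g_i)$ and $(\eps_i)$, whose increments obey mixed Hanson--Wright-type tail bounds. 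Applying Dudley's entropy integral (or generic chaining) to this process, each volumetric covering estimate $\mathcal{N}(\Sphere^{p-1},\eps)\lesssim(3/\eps)^p$ contributes one factor of $\sqrt{\log p}$, and the passage from the subgaussian to the sub-exponential regime of the chaos contributes an additional $\log p$, giving the total $\log^3(2p)$. The $\gamma_2$ (subgaussian) contribution is governed by the Frobenius bound $\|MD_u\|_F\leq\|u\|_2\|M\|_{1,2}$ and yields a term of order $\sqrt{n}\,\|M\|_{1,2}\|\Sigma\|$; the $\gamma_1$ (sub-exponential) contribution is governed by the operator-norm bound $\|T(u,v)\|\leq\|M\|\|\Sigma\|$ and yields a term of order $\|M\|\|\Sigma\|$. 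Dividing by $n$ gives the stated inequality.

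The main obstacle is this chaining step: the process is neither uniformly subgaussian nor uniformly sub-exponential, and obtaining the clean separation between the $n^{-1/2}$ and $n^{-1}$ scalings requires a mixed two-level chaining in the spirit of Talagrand's $\gamma_2+\gamma_1$ decomposition, or equivalently a general bound for suprema of chaos processes. The delicate point is identifying $\|M\|_{1,2}$ and $\|M\|$ as the correct geometric quantities attached to the two chaining metrics on the product sphere; once this identification is made, the three logarithmic factors emerge naturally from the volumetric covering numbers of $\Sphere^{p-1}$.
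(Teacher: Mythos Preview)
First, a clarification: the paper does not give its own proof of this theorem. It is quoted from \cite{LevinaVershynin}, and the only information the paper supplies about the argument is the sentence ``The proof of Theorem~\ref{th:LevinaVershyninResult} is based on decoupling, conditioning, covering argument and Gaussian concentration inequality for Lipschitz functions.'' So the comparison below is against the Levina--Vershynin strategy as summarized there.

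Your reduction via symmetrization and the Hadamard identity to the chaos
\[
\sum_{i=1}^n \eps_i\, g_i^T T(u,v)\, g_i,\qquad T(u,v)=\tfrac12\Sigma^{1/2}(D_uMD_v+D_vM^TD_u)\Sigma^{1/2},
\]
is correct and clean, and the pointwise bounds $\|T(u,v)\|_F\le \|\Sigma\|\,\|M\|_{1,2}$ and $\|T(u,v)\|\le \|\Sigma\|\,\|M\|$ are right. The gap is in the chaining step. You write that ``each volumetric covering estimate $\mathcal{N}(\Sphere^{p-1},\eps)\lesssim(3/\eps)^p$ contributes one factor of $\sqrt{\log p}$.'' That is not what happens: plugging $\log\mathcal{N}\asymp p\log(3/\eps)$ into Dudley's integral (or into $\gamma_2$) contributes a factor $\sqrt{p}$, not $\sqrt{\log p}$. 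Equivalently, a union bound over an $\eps$-net of the product sphere forces $t\asymp p$ in the Hanson--Wright tail, so the naive chaos--chaining route yields
\[
\mean\norm{M\cdot\hat\Sigma_n-M\cdot\Sigma}\ \lesssim\ \|\Sigma\|\Bigl(\|M\|_{1,2}\sqrt{\tfrac{p}{n}}+\|M\|\,\tfrac{p}{n}\Bigr),
\]
which is polynomially worse in $p$ than \eqref{eq:SymmetricMaskGaussian}. Your uniform bounds on $\|T(u,v)\|_F$ and $\|T(u,v)\|$ only control the diameter of the index set in the two metrics; the $\gamma_2$ and $\gamma_1$ functionals are genuinely larger.

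What produces the purely logarithmic dependence in the Levina--Vershynin proof is precisely the pair of ingredients the paper names and that your outline omits: \emph{decoupling} replaces $X_iX_i^T$ by $X_iY_i^T$ with an independent copy $Y_i$, and then \emph{conditioning} on $(Y_i)$ turns the problem into a Gaussian one in $(X_i)$ for which $\norm{M\cdot\sum_iX_iY_i^T}$ can be controlled columnwise. It is this reduction to a maximum over the $p$ columns of $M$ (and an analogous step for the other factor) that makes $\|M\|_{1,2}=\max_j\|M e_j\|_2$ the governing quantity and converts the complexity of $\Sphere^{p-1}$ into $\log p$ rather than $\sqrt{p}$. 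After that reduction, Gaussian concentration for Lipschitz functions and a covering argument finish the job, with the three logarithmic factors accumulating from the separate conditioning/covering stages. Without this decoupling--conditioning idea (or, alternatively, the matrix moment machinery of \cite{ChenGittensTropp}), the direct $\gamma_2+\gamma_1$ chaining you propose cannot reach \eqref{eq:SymmetricMaskGaussian}.
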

In the particular case when the entries of $M$ are either $0$ or $1$, estimate (\ref{eq:SymmetricMaskGaussian}) leads to the following corollary. 
\begin{corollary}
Let $X_1,\ldots,X_n$ be drawn from a multivariate Gaussian distribution $\Gauss(0,\Sigma)$. Assume that the entries of $M\in\RR^{p\times p}$ are equal to $0$ or $1$ and that there are at most $m$ nonzero entries in each column. Then
\begin{equation}\label{eq:SymmetricMaskZeroOneEntries}
\mean\norm{M\cdot\hat{\Sigma}_n-M\cdot\Sigma}\leq C\log^3(2p)\brac{\sqrt\frac{m}{n}+\frac{m}{n}}\norm{\Sigma}.
\end{equation}
\end{corollary}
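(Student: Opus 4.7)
The plan is to apply Theorem~\ref{th:LevinaVershyninResult} directly, so the only real task is to bound the two matrix norms $\norm{M}_{1,2}$ and $\norm{M}$ appearing on the right-hand side of \eqref{eq:SymmetricMaskGaussian} in terms of the sparsity parameter $m$.

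First I would handle $\norm{M}_{1,2}$. Since $m_{ij}\in\{0,1\}$, we have $m_{ij}^2=m_{ij}$, and the assumption that each column has at most $m$ nonzero entries immediately gives
\[
\norm{M}_{1,2}=\max_{j}\brac{\sum_i m_{ij}^2}^{1/2}=\max_j\brac{\sum_i m_{ij}}^{1/2}\leq\sqrt{m}.
\]

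Next I would bound the spectral norm $\norm{M}$. Because $M$ arises as a mask matrix it is symmetric, so the column sparsity assumption also controls the row sparsity. For any matrix one has $\norm{M}\le\sqrt{\norm{M}_{1\to 1}\,\norm{M}_{\infty\to\infty}}$, and since both the maximum column sum and maximum row sum of $|m_{ij}|$ are at most $m$, this yields $\norm{M}\le m$. (Equivalently, one can invoke the Gershgorin circle theorem for the symmetric $0$/$1$ matrix $M$.)

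Substituting $\norm{M}_{1,2}\leq\sqrt m$ and $\norm{M}\leq m$ into \eqref{eq:SymmetricMaskGaussian} produces
\[
\mean\norm{M\cdot\hat{\Sigma}_n-M\cdot\Sigma}\leq C\log^3(2p)\brac{\frac{\sqrt m}{\sqrt n}+\frac{m}{n}}\norm{\Sigma}=C\log^3(2p)\brac{\sqrt{\frac{m}{n}}+\frac{m}{n}}\norm{\Sigma},
\]
which is the claimed estimate \eqref{eq:SymmetricMaskZeroOneEntries}. There is no serious obstacle here; the entire content of the corollary is the observation that $0$/$1$ masks with bounded column support have their $(1,2)$-norm and spectral norm controlled by $\sqrt m$ and $m$ respectively.
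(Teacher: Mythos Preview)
Your proposal is correct and matches the paper's approach exactly: the paper does not spell out a proof but simply remarks that the corollary follows from Theorem~\ref{th:LevinaVershyninResult} in the special case of $0$/$1$ masks, and you have supplied precisely the obvious norm bounds $\norm{M}_{1,2}\le\sqrt{m}$ and $\norm{M}\le m$ that make this work.
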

The proof of Theorem \ref{th:LevinaVershyninResult} is based on decoupling, conditioning, covering argument and Gaussian concentration inequality for Lipschitz functions. It also allows to achieve error bounds that hold in probability. 

By means of a matrix moment inequality an error estimate that holds in expectation was generalized to arbitrary distributions with finite fourth moments in \cite{ChenGittensTropp}. When restricted to the Gaussian case it provides an improvement of (\ref{eq:SymmetricMaskGaussian}) in the logarithmic factor.


\subsection{Banding and tapering estimators of Toeplitz covariance matrices}

In this paper we are interested in obtaining bounds similar to (\ref{eq:SymmetricMaskGaussian}) and (\ref{eq:SymmetricMaskZeroOneEntries}) under the additional assumption that $X$ is stationary, resulting in the covariance matrix $\Sigma$ to be of Toeplitz structure, 
\begin{equation}\label{eq:ToeplitzMatrix}
\Sigma=\begin{pmatrix}
\sigma_0 & \sigma_1 & \sigma_2 & \dots & \ldots & \sigma_{p-1}\\
\sigma_1 & \sigma_0 & \sigma_1 & \ddots & & \vdots\\
\sigma_2 & \sigma_1 & \ddots & \ddots & \ddots & \vdots\\
\vdots & \ddots & \ddots & \ddots & \sigma_1 & \sigma_2\\
\vdots &  & \ddots & \sigma_1& \sigma_0 & \sigma_1\\
\sigma_{p-1} & \ldots & \ldots & \sigma_2 & \sigma_1 & \sigma_0
\end{pmatrix}.
\end{equation}
Stationary signals appear in many applications including time series analysis and mobile communications.
Our intuition is that the additional structure allows to further reduce the required number of samples.
The easiest way to improve the sample covariance estimator for this setting is to average the entries of $\hat{\Sigma}_n$ in (\ref{eq:SampleCovarianceMatrix}) over the diagonals. For $0\leq r\leq p-1$, set
\[
\tilde\sigma_r=\frac{1}{p-r}\sum_{s-t=r}\hat\sigma_{st}
\] 
and define a new unbiased estimator as the Toeplitz matrix $\tilde{\Sigma}_n=[\tilde\sigma_{st}]_{s,t=1}^p$ with $\tilde\sigma_{st}=\tilde\sigma_{\abs{s-t}}$. 

Assuming that there is an ordering among the variables of $X$ and that the variables, which are far apart, 
are only weakly correlated, we may construct more accurate estimators of $\Sigma$, so called banding and tapering
estimators \cite{CaiRenZhou}, that we describe here with the mask formalism. 
For a given positive integer $m\leq\frac{p}{2}$ and $0\leq r\leq p-1$, set 
\[
a_r=\left\{\begin{array}{ll}
1, & r\leq\frac{m}{2},\\
2-\frac{2r}{m}, &  \frac{m}{2}<r\leq m,\\
0, &  \text{otherwise},
\end{array}\right.\quad\text{and}\quad b_r=\left\{\begin{array}{ll}
1, & r\leq m,\\
0, & \text{otherwise}.
\end{array}\right.
\]
The tapering and banding estimators are defined as $M_{\operatorname{tap}}\cdot\tilde{\Sigma}_n$ and $M_{\operatorname{band}}\cdot\tilde\Sigma_n$, where 
\[
(M_{\operatorname{tap}})_{st}=a_{\abs{s-t}}\quad \text{ and } \quad (M_{\operatorname{band}})_{st}=b_{\abs{s-t}}.
\]
A bound on the variance term in the error bound (\ref{eq:BiasVariance}) for either the tapering or banding mask can be derived from results in \cite{CaiRenZhou}, see eq.\ (22) and Lemma 5 of loc. cit. 
\begin{theorem}\label{thm:crz}
Let $X_1,\ldots,X_n$ be drawn from a multivariate Gaussian distribution $\Gauss(0,\Sigma)$. Let $M\in\RR^{p\times p}$ be a tapering or banding mask with $m\leq\frac{p}{2}$. Then
\begin{equation}\label{eq:BandingTaperingEstimators}
\mean\norm{M\cdot\tilde\Sigma_n-M\cdot\Sigma}\leq C\sqrt\frac{m\log(np)}{np}.
\end{equation}
\end{theorem}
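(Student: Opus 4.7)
The strategy, hinted at in the abstract, is to control the spectral norm of a finite Toeplitz matrix by the supremum norm of its symbol. Both $M\in\{M_{\operatorname{band}},M_{\operatorname{tap}}\}$ and $\tilde\Sigma_n-\Sigma$ are Toeplitz, hence so is $E:=M\cdot(\tilde\Sigma_n-\Sigma)$, which moreover is banded with bandwidth at most $m$. Writing $M_r\in\{a_r,b_r\}$, its symbol is the real trigonometric polynomial
\[
f(\theta)=M_0\,(\tilde\sigma_0-\sigma_0)+2\sum_{r=1}^{m} M_r\,(\tilde\sigma_r-\sigma_r)\cos(r\theta),
\]
and the classical bound $\|E\|\leq\|f\|_\infty$ for finite Hermitian Toeplitz matrices reduces the problem to estimating $\mathbb{E}\sup_\theta|f(\theta)|$.

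Expanding $\tilde\sigma_r-\sigma_r$ as an average of the centered products $X_i(s)X_i(s+r)-\Sigma_{s,s+r}$ casts $f(\theta)$ as a Gaussian chaos of order two,
\[
f(\theta)=\frac{1}{n}\sum_{i=1}^{n}\bigl(X_i^{T} B(\theta)X_i-\mathbb{E}\,X_1^{T} B(\theta)X_1\bigr),
\]
where $B(\theta)\in\RR^{p\times p}$ is the symmetric Toeplitz matrix with $B(\theta)_{s,t}=M_{|s-t|}\cos((s-t)\theta)/(p-|s-t|)$ for $|s-t|\leq m$ and zero elsewhere. Using $M_r\in[0,1]$ and $m\leq p/2$, a direct computation yields, uniformly in $\theta$,
\[
\|B(\theta)\|_F^{2}\lesssim m/p\qquad\text{and}\qquad\|B(\theta)\|\lesssim m/p,
\]
the operator-norm bound arising by applying the Toeplitz--symbol inequality to $B(\theta)$ itself (the sum of absolute coefficients of its symbol is $\lesssim m/p$). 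The Hanson--Wright inequality, applied via the decomposition $X_i=\Sigma^{1/2}g_i$ with $g_i\sim\Gauss(0,I_p)$, then yields the subexponential tail
\[
\Pr\bigl(|f(\theta)|>t\bigr)\leq 2\exp\!\left(-c\min\!\Bigl(\tfrac{np\,t^{2}}{\|\Sigma\|^{2} m},\;\tfrac{np\,t}{\|\Sigma\|\,m}\Bigr)\right).
\]

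To pass from pointwise tails to the supremum, use that $f$ is a trigonometric polynomial of degree at most $m$: by Bernstein's inequality $\|f'\|_\infty\leq m\|f\|_\infty$, so an equispaced net $\Theta_N\subset[0,2\pi]$ with $N\gtrsim m$ satisfies $\sup_\theta|f(\theta)|\leq 2\max_{\theta\in\Theta_N}|f(\theta)|$. A union bound over $\Theta_N$ combined with integration of the Hanson--Wright tail gives
\[
\mathbb{E}\sup_\theta|f(\theta)|\lesssim\|\Sigma\|\sqrt{\frac{m\log N}{np}}+\|\Sigma\|\,\frac{m\log N}{np};
\]
choosing $N$ polynomial in $np$, and noting that the sub-Gaussian term dominates whenever $m\log(np)\leq np$, yields the claimed bound with the constant $C$ absorbing $\|\Sigma\|$.

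The delicate step is the accurate evaluation of $\|B(\theta)\|_F$ and $\|B(\theta)\|$: it is precisely the extra factor $1/(p-|s-t|)$ produced by averaging $\hat\Sigma_n$ along the diagonals that improves the classical masked rate $\sqrt{m/n}$ of \eqref{eq:SymmetricMaskZeroOneEntries} to the sharper $\sqrt{m/(np)}$ in \eqref{eq:BandingTaperingEstimators}. The remaining tools (Hanson--Wright, Bernstein's inequality for trigonometric polynomials, covering and tail integration) are standard.
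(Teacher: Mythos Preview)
Your proposal is correct and follows essentially the same route as the paper's own argument, namely the proof of Theorem~\ref{th:MainResult} specialized to the Gaussian case as in Corollary~\ref{cor:Gaussian}: pass to the spectral density, express it as a centered quadratic form $\langle M\cdot V^x X_i,X_i\rangle$ (your $B(\theta)$ is the paper's $M\cdot V^x$), apply Hanson--Wright, discretize the trigonometric polynomial to a finite grid, take a union bound, and integrate the tail. The only cosmetic differences are that the paper discretizes via a result of Zygmund using $4p$ grid points rather than Bernstein's inequality with $N\gtrsim m$ points, and bounds the operator norm of $M\cdot V^x$ via the Gershgorin disc theorem rather than the symbol bound; both choices lead to the same estimates (indeed the paper's version yields $\log p$ in place of $\log(np)$).
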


Comparing (\ref{eq:SymmetricMaskZeroOneEntries}) and (\ref{eq:BandingTaperingEstimators}), we see that there is an improvement by a factor of $\frac{1}{\sqrt p}$ in the error bound for Toeplitz matrices. However, the result (\ref{eq:BandingTaperingEstimators}) holds only for the special type of masks $M$, whereas (\ref{eq:SymmetricMaskZeroOneEntries}) is valid for any symmetric $M\in\RR^{p\times p}$. Our goal is to extend the error estimate (\ref{eq:BandingTaperingEstimators}) to general Toeplitz masks and not necessarily Gaussian distributions. 


\subsection{Our contribution}

Our result holds for distributions that satisfy the so-called convex concentration property, see Definition \ref{def:CCP} below which includes mean-zero Gaussian random vectors. The class of such distribution is, however, much broader than the Gaussian class. 
For  a Toeplitz mask $M\in\RR^{p\times p}$ with $M_{st}=\omega_{\abs{s-t}} \geq 0$, we define the weighted $\ell_1$- and $\ell_2$-norm of its first row $\omega=\brac{\omega_{\ell}}_{\ell=0}^{p-1}$ by
\[
 \norm{\omega}_{1,*}=\sum_{\ell=0}^{p-1}\frac{\omega_{\ell}}{p-\ell}\quad\text{and}\quad \norm{\omega}_{2,*}= \brac{\sum_{\ell=0}^{p-1}\frac{\omega_{\ell}^2}{p-\ell}}^{1/2}.
\]


\begin{theorem}\label{th:MainResult}
Let $X_1,\ldots,X_n$ be drawn from a distribution $X\in\RR^p$ such that $\mean X=0$, $\Sigma=\mean XX^T$ is Toeplitz and $X$ satisfies the convex concentration property with constant $K$. Let the Toeplitz mask $M\in\RR^{p\times p}$ with first row $\omega$. Then for every $t>0$,
\begin{equation}\label{prob:bound}
\PP\brac{\norm{M\cdot\tilde\Sigma_n-M\cdot\Sigma}\geq CK^2\brac{\norm{\omega}_{2,*}\sqrt\frac{t}{n}+\frac{\norm{\omega}_{1,*}t}{n}}}\leq Cpe^{-t},
\end{equation}
and
\begin{equation}\label{mean:bound}
\mean\norm{M\cdot\tilde\Sigma_n-M\cdot\Sigma}\leq CK^2\brac{\norm{\omega}_{2,*}\sqrt{\frac{\log(p)}{n}}+\norm{\omega}_{1,*}\frac{\log(p)}{n}}.
\end{equation}
\end{theorem}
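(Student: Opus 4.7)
The plan is to reduce the spectral norm of the Toeplitz error matrix to the sup-norm of its symbol, then apply a Hanson-Wright type inequality pointwise, and finally pass to the supremum by discretization. Observe first that $R := M \cdot \tilde{\Sigma}_n - M \cdot \Sigma$ is itself a real symmetric Toeplitz matrix whose $r$-th diagonal equals $\omega_r(\tilde{\sigma}_r - \sigma_r)$. The classical Parseval-type bound $\norm{R} \leq \norm{g}_\infty$ for Hermitian Toeplitz matrices with symbol $g$ gives
\[
\norm{R} \leq \sup_{\theta \in [0,2\pi)} \abs{g(\theta)},\qquad g(\theta) = \omega_0(\tilde{\sigma}_0 - \sigma_0) + 2\sum_{r=1}^{p-1}\omega_r (\tilde{\sigma}_r - \sigma_r)\cos(r\theta).
\]

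Substituting the definition of $\tilde{\sigma}_r - \sigma_r$ rewrites $g(\theta)$ as the centered quadratic form
\[
g(\theta) = \frac{1}{n} \sum_{i=1}^n \brac{X_i^T B(\theta) X_i - \mean X_i^T B(\theta) X_i},
\]
where $B(\theta) \in \RR^{p \times p}$ is the symmetric Toeplitz matrix with entries $B(\theta)_{st} = \omega_{\abs{s-t}}\cos((s-t)\theta)/(p-\abs{s-t})$. A direct computation from the definition of $\norm{\omega}_{2,*}$ yields $\norm{B(\theta)}_F^2 \leq 2\norm{\omega}_{2,*}^2$, and the same Toeplitz-symbol bound applied now to $B(\theta)$ itself gives $\norm{B(\theta)} \leq 2\norm{\omega}_{1,*}$ uniformly in $\theta$.

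Since the convex concentration property lifts to concatenations of independent copies, Adamczak's version of the Hanson-Wright inequality applied to $(X_1, \ldots, X_n) \in \RR^{np}$ and the block-diagonal matrix $\frac{1}{n}\,\Id_n \otimes B(\theta)$, which has Frobenius norm $\norm{B(\theta)}_F/\sqrt{n}$ and spectral norm $\norm{B(\theta)}/n$, produces the pointwise tail estimate
\[
\PP\brac{\abs{g(\theta)} \geq CK^2\brac{\norm{\omega}_{2,*}\sqrt{t/n} + \norm{\omega}_{1,*}\,t/n}} \leq 2e^{-t},\qquad t > 0.
\]

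To pass from pointwise to supremum, use that $g$ is a real trigonometric polynomial of degree at most $p-1$: Bernstein's inequality yields $\norm{g'}_\infty \leq (p-1)\norm{g}_\infty$, so $\sup_\theta \abs{g(\theta)}$ is bounded by twice the maximum over a uniform $\delta$-net $N \subset [0, 2\pi)$ with $\delta \sim 1/p$ and $\abs{N} \sim p$. A union bound over $N$ then produces (\ref{prob:bound}), and (\ref{mean:bound}) follows by integrating the tail, with the $\log p$ factor absorbing the $p$-sized net when the Bernstein-type tail crosses into the regime where it is $\leq 1$. The main technical obstacle is pinning down a clean form of the Hanson-Wright inequality for a sum of $n$ independent quadratic forms under the convex concentration property, with the right dependence on $n$, $\norm{B(\theta)}_F$ and $\norm{B(\theta)}$; the concatenation trick above reduces this to a single application of Adamczak's theorem.
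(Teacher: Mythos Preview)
Your outline follows the paper's proof almost step for step: bound the spectral norm by the sup of the symbol, write the symbol as a centered quadratic form in each $X_i$ with the Toeplitz matrix $B(\theta)$, bound $\norm{B(\theta)}_F$ and $\norm{B(\theta)}$ by $\norm{\omega}_{2,*}$ and $\norm{\omega}_{1,*}$, apply a Hanson--Wright inequality, and pass to the supremum over a grid of size $O(p)$ (the paper invokes a result of Zygmund equivalent to your Bernstein argument).

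The one substantive difference is how the sum over $i$ enters. You stack $(X_1,\ldots,X_n)$ into $\RR^{np}$ and apply Adamczak's theorem once to the block-diagonal matrix $\tfrac{1}{n}\Id_n\otimes B(\theta)$, which presupposes that the convex concentration property tensorizes across independent copies with the same constant. This is true (it can be obtained from Maurey's property~$(\tau)$, which tensorizes and is equivalent up to constants to subgaussian concentration for convex Lipschitz functionals), but it is not immediate from Definition~\ref{def:CCP} and deserves a reference or a short argument. The paper sidesteps this entirely: it applies Theorem~\ref{th:HansonWright} to each $X_i$ separately, converts the resulting mixed tail into sub-gamma moment bounds for $Z_i^k$ via Theorem~\ref{thm:subgamma}, and then uses independence to sum the sub-gamma variables directly. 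That route is slightly longer but avoids the tensorization issue and makes the dependence on $n$ transparent without any extra input.
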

We note that error bound \eqref{mean:bound} in expectation also holds in the mean square error (MSE), as follows easily from
the probability bound \eqref{prob:bound} together with integration. The logarithmic factor in \eqref{mean:bound} cannot be 
removed in general, see also Section~\ref{sec:smooth} below.

By estimating the weighted $\ell_1$ and $\ell_2$-norm of either the tapering or banding mask, we obtain the following result which generalizes and very slightly improves
Theorem~\ref{thm:crz} above from \cite{CaiRenZhou}. Moreover, it provides an alternative proof.

\begin{corollary}\label{cor:Gaussian}
Let $X_1,\ldots,X_n$ be drawn from a distribution $X\in\RR^p$ such that $\mean X=0$, $\Sigma=\mean XX^T$ is Toeplitz and $X$ satisfies the convex concentration property with constant $K$.
Let $M\in\RR^{p\times p}$ be a tapering or banding mask with $m\leq\frac{p}{2}$. Then 
\begin{equation}
\mean\norm{M\cdot\tilde\Sigma_n-M\cdot\Sigma}\leq C K^2 \brac{\hspace{-0.5mm}\sqrt{\frac{m\log(p)}{pn}}+\frac{m\log(p)}{pn}}.
\end{equation}
\end{corollary}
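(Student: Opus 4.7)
My plan is to obtain this as a direct consequence of Theorem~\ref{th:MainResult}: it suffices to estimate $\norm{\omega}_{1,*}$ and $\norm{\omega}_{2,*}$ for the tapering and banding masks and substitute into \eqref{mean:bound}.

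First I would handle the banding case, where $\omega_\ell = b_\ell$ equals $1$ for $\ell \le m$ and $0$ otherwise. Since $m \le p/2$, every nonzero term satisfies $p - \ell \ge p - m \ge p/2$, so each of the $m+1$ nonzero summands in $\norm{\omega}_{1,*}$ is at most $2/p$, giving $\norm{\omega}_{1,*} \le 2(m+1)/p \le Cm/p$. The same bound applies to $\norm{\omega}_{2,*}^2$ because $\omega_\ell^2 = \omega_\ell$ in this case, so $\norm{\omega}_{2,*} \le C\sqrt{m/p}$.

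Next I would treat the tapering mask. Here $\omega_\ell = a_\ell \in [0,1]$ is also supported in $\{0, \ldots, m\}$, so $\omega_\ell \le b_\ell$ and $\omega_\ell^2 \le b_\ell^2$ pointwise. Hence $\norm{\omega}_{1,*}$ and $\norm{\omega}_{2,*}$ for the tapering mask are bounded above by the corresponding quantities for the banding mask, and the same estimates apply.

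Finally, substituting $\norm{\omega}_{1,*} \le Cm/p$ and $\norm{\omega}_{2,*} \le C\sqrt{m/p}$ into the expectation bound \eqref{mean:bound} of Theorem~\ref{th:MainResult} yields
\[
\mean\norm{M\cdot\tilde\Sigma_n-M\cdot\Sigma}\le CK^2\brac{\sqrt{\frac{m}{p}}\sqrt{\frac{\log p}{n}} + \frac{m}{p}\cdot\frac{\log p}{n}} = CK^2\brac{\sqrt{\frac{m\log p}{pn}} + \frac{m\log p}{pn}},
\]
which is exactly the claimed bound. There is no serious obstacle here: the only content beyond Theorem~\ref{th:MainResult} is the elementary observation that, for masks supported in diagonals of index at most $m \le p/2$, the weighting factor $1/(p-\ell)$ is uniformly comparable to $1/p$.
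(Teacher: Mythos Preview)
Your proposal is correct and follows essentially the same route as the paper: both arguments bound $\norm{\omega}_{1,*}$ and $\norm{\omega}_{2,*}$ for the banding and tapering masks using that the entries are in $[0,1]$, supported on $\{0,\dots,m\}$, and that $p-\ell\ge p/2$ there, then plug into Theorem~\ref{th:MainResult}. The only cosmetic difference is that the paper treats both masks at once via $\omega_\ell\le 1$, while you first handle banding and then reduce tapering to it via $a_\ell\le b_\ell$.
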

In the Gaussian case $X \sim {\mathcal{N}}(0, \Sigma)$, we have $K^2= 2 \norm{\Sigma}$, so that 
\begin{equation}
\mean\norm{M\cdot\tilde\Sigma_n-M\cdot\Sigma}\leq C\norm{\Sigma}\brac{\hspace{-0.5mm}\sqrt{\frac{m\log(p)}{pn}}+\frac{m\log(p)}{pn}}.
\end{equation}
Corollary \ref{cor:Gaussian} implies that for an error tolerance $\eps\in(0,1)$, the sample size
\[
n\geq C^2\eps^{-2}\frac{m}{p}\log(p)
\]
is sufficient for
\[
\mean\norm{M\cdot\tilde\Sigma_n-M\cdot\Sigma}\leq \eps\norm{\Sigma}.
\] 
Therefore, even though the number of observations may be significantly smaller than the dimension of the underlying distribution, partial estimation of the covariance matrix is performed with small error.  

In some applications, Toeplitz covariance matrices may have a sparsity structure that is more complicated than the one induced by the banding estimator
in the sense that zeros in $\omega$ interleave with nonzero entries. This includes spectrum sensing applications \cite{ZengLiang} 
where one needs to test the occupancy of spectral bands for wireless communication purposes. Man made signals, for instance in OFDM \cite{chaudhari}, may have statistics with Toeplitz covariance matrices and 
non-trivial sparsity structure with some zeros close to the diagonal and some non-zeros far away from the diagonal. 

Denoting by $S \subset \{0,\hdots,p-1\}$ the support of the first row of a sparse Toeplitz matrix $\Sigma$, it is natural to work with a Toeplitz mask $M$ having
first row $\omega = \mathbf{1}_S$ being the indicator of $S$, i.e., $\omega_j = 1$ for $j \in S$ and $\omega_j = 0$ for $j \notin S$.
The following weighted version of the cardinality of $S$, introduced in similar form in \cite{rauhutward}, determines the required number of samples,
\[
\nu(S) = \sum_{\ell \in S} \frac{p}{p-\ell}.
\]
Observe that for $\omega = \mathbf{1}_S$, $\nu(S) = p \|\omega\|_{1,*} = p \|\omega\|_{2,*}^2$. If $S$ is contained in a band of length $q$, i.e., 
$S \subset \{0,\hdots,q\}$ then
$\nu(S) \leq \frac{p}{p-q} \#S$ and for $q \leq p/2$ we have $\#S \leq \nu(S) \leq 2 \#S$. The following is an immediate consequence of Theorem~\ref{th:MainResult}. 

\begin{corollary}\label{cor:SparseMask}
Let $X_1,\ldots,X_n$ be drawn from a distribution $X\in\RR^p$ such that $\mean X=0$, $\Sigma=\mean XX^T$ is Toeplitz and $X$ satisfies the convex concentration property with constant $K$. Let $M\in\RR^{p\times p}$ be sparse Toeplitz with $M_{st}=\omega_{\abs{s-t}}\in\{0,1\}$ where $\omega$ has
support $S \subset \{0,\hdots,p-1\}$ of weighted cardinality $\nu(S)$.
Then for every $t>0$,
\[
\PP\brac{\norm{M\cdot\tilde\Sigma_n-M\cdot\Sigma}\geq CK^2\brac{\sqrt\frac{\nu(S) t}{pn}+\frac{\nu(S) t}{pn}}}\leq Cpe^{-t},
\]
and
\[
\mean\norm{M\cdot\tilde\Sigma_n-M\cdot\Sigma}\leq CK^2\brac{\sqrt{\frac{\nu(S) \log(p)}{pn}}+\frac{\nu(S) \log(p)}{pn}}.
\]
\end{corollary}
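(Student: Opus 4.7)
The plan is to deduce Corollary~\ref{cor:SparseMask} directly from Theorem~\ref{th:MainResult} by computing the weighted norms $\norm{\omega}_{1,*}$ and $\norm{\omega}_{2,*}$ in the special case $\omega = \mathbf{1}_S$ and substituting into \eqref{prob:bound} and \eqref{mean:bound}. The substitution is purely algebraic, so there is no real obstacle; the work reduces to verifying the identity $\nu(S) = p\norm{\omega}_{1,*} = p\norm{\omega}_{2,*}^2$ that is already asserted in the text just before the corollary.

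First, I would record that for $\omega_\ell = \mathbf{1}_S(\ell) \in \{0,1\}$ one has $\omega_\ell^2 = \omega_\ell$, and therefore
\[
\norm{\omega}_{1,*} = \sum_{\ell=0}^{p-1}\frac{\omega_\ell}{p-\ell} = \sum_{\ell\in S}\frac{1}{p-\ell} = \frac{\nu(S)}{p},
\qquad
\norm{\omega}_{2,*}^2 = \sum_{\ell=0}^{p-1}\frac{\omega_\ell^2}{p-\ell} = \norm{\omega}_{1,*} = \frac{\nu(S)}{p}.
\]
Consequently $\norm{\omega}_{2,*}\sqrt{t/n} = \sqrt{\nu(S)t/(pn)}$ and $\norm{\omega}_{1,*}\,t/n = \nu(S)t/(pn)$. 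Since $M$ has nonnegative Toeplitz structure, Theorem~\ref{th:MainResult} applies, and plugging these two expressions into \eqref{prob:bound} yields the claimed tail bound, while plugging them into \eqref{mean:bound} (equivalently, specializing to $t = \log p$ after integration) yields the stated expectation bound. Both inequalities of the corollary are thus obtained simultaneously, and no step beyond the elementary computation above is required.
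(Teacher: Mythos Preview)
Your proposal is correct and follows exactly the route the paper intends: the corollary is stated as an immediate consequence of Theorem~\ref{th:MainResult}, with the identity $\nu(S) = p\norm{\omega}_{1,*} = p\norm{\omega}_{2,*}^2$ for $\omega = \mathbf{1}_S$ noted just before the statement. The paper does not even supply a separate proof, so your explicit verification of that identity and the direct substitution into \eqref{prob:bound} and \eqref{mean:bound} is precisely what is required.
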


In the case of a Gaussian distribution this result reduces to the previously known Theorem~\ref{thm:crz} for the banding estimator if $S = \{0,\hdots, m\}$, but may handle general sparsity patterns (and more general distributions).
For instance, if $S \subset \{0,\hdots,p/2\}$ of (small) cardinality $s = \#S$, then as few as
\[
n \geq C \varepsilon^{-2} \frac{s}{p} \log(p)
\] 
samples ensure $\norm{M\cdot\tilde\Sigma_n-M\cdot\Sigma}\leq \eps\norm{\Sigma}$ with high probability.

\begin{remark}
It would be interesting to investigate whether the error bound \eqref{mean:bound} in expectation can be generalized
to heavier tailed distributions in the spirit of the main results in \cite{ChenGittensTropp} which only assume finite fourth moments. It is however presently not clear whether it is possible to adapt the proof technique of \cite{ChenGittensTropp} to our Toeplitz covariance structure.
\end{remark}

\subsection{Bounds over a class of smooth spectral densities}
\label{sec:smooth}

Let us shortly describe an application of our results studied in more detail in \cite{CaiRenZhou}.
To a Toeplitz covariance matrix $\Sigma$ of the form \eqref{eq:ToeplitzMatrix} we associate its spectral density function
\begin{equation}\label{eq:SpectralDensityFunction}
f(x) = f_\Sigma(x) =\sigma_0+2\sum_{r=1}^{p-1}\sigma_r\cos rx,
\quad x\in[-\pi,\pi]. 
\end{equation}
The proof of our main result uses the fact that the spectral norm of $\Sigma$ can be estimated by the $L^\infty$ norm
of $f$, $\norm{\Sigma}\leq \|f\|_\infty := \sup\limits_{x\in[-\pi,\pi]}\abs{f(x)}$, see e.g.\ \cite[Chapter 5.2]{GrenanderSzegoe}.

As in \cite{CaiRenZhou} we introduce a class of Toeplitz covariance matrices related to a Lipschitz condition on the spectral densities. 
For $\beta = \gamma + \alpha$ with $\gamma \in \NN_0$ and $\alpha \in (0,1]$, let 
\[
\mathcal{F}_\beta(L_0, L) = \left\{\Sigma \succ 0 : \|\Sigma\| \leq L_0, \sup_{x \in [-\pi,\pi]} |f_\Sigma^{(\gamma)}(x+h) - f_\Sigma^{(\gamma)}(x)| \leq L h^{\alpha} \right\},
\]
where $f_\Sigma^{(\gamma)}$ is the $\gamma$-th derivative of $f_\Sigma$. Since the decay of Fourier coefficients is closely connected
to smoothness conditions, these two classes are contained in each other for certain choices of parameters.

Choosing $M = M_{\operatorname{tap}}$ as the mask of the tapering estimator with parameter $m$, 
the spectral function $f_{M_{\operatorname{tap}}\cdot \Sigma}$ equals $f_\Sigma * V_m$ where $*$ denotes convolution and 
$V_m$ is the so-called De la Vall{\`e}e-Poussin kernel. Applying classical results from Fourier series, see e.g.\ 
\cite[Chapter 3.13]{Zygmund} yields $\|M_{\operatorname{tap}} \cdot \Sigma - \Sigma\| \leq \| f_\Sigma * V_m - f_\Sigma\|_\infty 
\leq 4 \inf_{q \in T_m} \|q-f\|_\infty$. For $\Sigma$ in $\mathcal{F}_\beta(L_0,L)$ the last term can further be estimated
by $3L m^{-\beta}$ so that 
\begin{equation}\label{est:mtap}
\|M_{\operatorname{tap}} \cdot \Sigma - \Sigma\| \leq 12 L m^{-\beta}.
\end{equation}
Under the assumption of 
Corollary~\ref{cor:Gaussian} and assuming $K^2 = c \norm{\Sigma}$ (as in the Gaussian case) this leads together with \eqref{eq:BiasVariance} to 
\[
\mean \|M_{\operatorname{tap}}\cdot \tilde{\Sigma} - \Sigma\| \leq C \norm{\Sigma} \left(\sqrt{\frac{m \log(p)}{pn}} + \frac{m \log(p)}{pn}\right) + 12 L m^{-\beta}
\]
Note that $\norm{\Sigma} \leq L_0$ due to $\Sigma \in {\mathcal{F}}(L_0,L)$. 
Choosing 
\begin{equation}\label{mtap:choice}
m = \left\lfloor \left(\frac{L}{L_0^{2\beta + 2}} \frac{np }{\log(p)}\right)^{1/(2\beta+1)}  \right\rfloor
\end{equation}
and making the mild assumption $m \leq pn/\log(p)$,
we obtain
\begin{equation}\label{tap:estimate}
\mean \|M_{\operatorname{tap}} \cdot \tilde{\Sigma} - \Sigma\| \leq C L \left( \frac{\log(p)}{np} \right)^{\frac{\beta}{2\beta+1}}.
\end{equation}
Of course, a related probability estimate and an MSE estimate can be derived from \eqref{prob:bound}.
It is shown in \cite[Theorem 5]{CaiRenZhou} that this bound is optimal over the class $\mathcal{F}_\beta(L_0,L)$.
In particular, the logarithmic factor $\log(p)$ in \eqref{tap:estimate} cannot be removed. This means that the logarithmic
factor in our general bound \eqref{mean:bound} cannot be removed in general, either. 

In a similar way \cite{CaiRenZhou}, we can analyze the performance of the banding estimator over $\mathcal{F}_\beta(L_0,L)$. 
This leads to the estimate
\[
\mean \|M_{\operatorname{band}} \cdot \tilde{\Sigma} - \Sigma\| \leq C L_0 \left(\sqrt{\frac{m \log(p)}{pn}} + \frac{m \log(p)}{pn}\right) + 12 L \log(m) m^{-\beta}.
\]
Choosing 
\[
m = \left\lfloor \left(\frac{L}{L_0^{2\beta + 2}} \frac{np }{\log(p)}\right)^{1/(2\beta+1)} \log(p)^{1/\beta}  \right\rfloor
\]
leads to
\[
\mean \|M_{\operatorname{band}} \cdot \tilde{\Sigma} - \Sigma\| \leq C L \left( \frac{\log(p)^{\frac{4\beta+3}{4\beta+2}}}{np} \right)^{\frac{\beta}{2\beta+1}}.
\]
Compared to the bound for the tapering estimator this is slightly worse.

The article \cite{CaiRenZhou} considers also a second class of Toeplitz covariance matrices, but for the sake of brevity, we will not go into detail here.


\subsection{Positive semidefinite estimator}

It is natural to ask that a covariance estimator is positive semidefinite and some applications will strictly require this.
However, our masked estimator $M\cdot\tilde\Sigma_n$ does not necessarily fulfill this condition. In order to obtain a positive semidefinite Toeplitz estimator, we can apply a procedure described in \cite[Chapter 5]{CaiRenZhou}, which is based on a circulant extension of our original masked estimator. For the sake of completeness we present the construction here and provide an approximation error of the true covariance matrix. 

For an arbitrary (positive semidefinite) Toeplitz covariance matrix $\Sigma\in\RR^{p\times p}$ given by (\ref{eq:ToeplitzMatrix}) and corresponding spectral density function $f_{\Sigma}$ as in (\ref{eq:SpectralDensityFunction}) we define a circulant matrix $\Sigma_{\circm}\in\RR^{(2p-1)\times(2p-1)}$ with entries
\[
\brac{\Sigma_{\circm}}_{st}=\begin{cases}
\sigma_{\abs{s-t}}, & \text{if } \abs{s-t}\leq p-1,\\
\sigma_{2p-1}-\sigma_{\abs{s-t}}, & \text{if } p\leq\abs{s-t}\leq 2p-2.
\end{cases}
\]
The eigenvalue decomposition of $\Sigma_{\circm}$ is given by
\[
\Sigma_{\circm} =\sum_{\abs j\leq p-1}\lambda_ju_j\bar{u}_j^T
\]
with the eigenvectors
\[
u_j=\frac{1}{\sqrt{2p-1}}\brac{1,e^{-2\pi i j/(2p-1)},\ldots,e^{-2\pi i j(2p-2)/(2p-1)}}^T,\quad \abs{j}\leq p-1,
\]
and the (non-negative) eigenvalues 
\begin{align}
\lambda_j&= \sum_{r=0}^{p-1}\sigma_re^{-2\pi i jr/(2p-1)} +  \sum_{r=p}^{2p-2}\sigma_{2p-1-r} e^{-2\pi i jr/(2p-1)} \notag\\
&= \sum_{r=0}^{p-1}\sigma_re^{-2\pi i jr/(2p-1)} +  \sum_{r=-(p-1)}^{-1}\sigma_{-r} e^{-2\pi i jr/(2p-1)}\notag\\
&=\sigma_0+2\sum_{r=1}^{p-1}\sigma_r\cos \frac{2\pi rj}{2p-1}=f_{\Sigma}\brac{\frac{2\pi j}{2p-1}},\quad \abs{j}\leq p-1\label{eq:EigenValuesOfCirculantMatrix}.
\end{align}

Let $M\cdot\tilde{\Sigma}_n$ be our masked estimator with spectral density function $f_{M\cdot\tilde{\Sigma}_n}$, which may possibly take negative values. Define $f^*:[-\pi,\pi]\to\RR$ as the non-negative part of $f_{M\cdot\tilde\Sigma_n}$, 
 \[
 f^*(x)=\begin{cases}
 f_{M\cdot\tilde{\Sigma}_n}(x), & \text{if } f_{M\cdot\tilde{\Sigma}_n}(x)\geq 0,\\
 0, & \text{otherwise}.
 \end{cases}
 \]
Set
\[
\Sigma^*_{\circm} = \sum_{\abs{j}\leq p-1}f^*\brac{\frac{2\pi j}{2p-1}}u_j\bar{u}_j^T.
\]
Then $\Sigma^*_{\circm}\in\RR^{(2p-1)\times(2p-1)}$ is circulant and positive semidefinite. As a new estimator $\Sigma^*$ we take the restriction of $\Sigma^*_{\circm}$ to its first $p$ rows and $p$ columns. It is clear that $\Sigma^*$ is Toeplitz and positive semidefinite. But note that in the case of a sparse mask $M$, the estimator $\Sigma^*$ may in general fail to be sparse. Nevertheless, we have the following error bound.
\begin{theorem}\label{th:Perfomance OfPositiveEstimator}  
Let $X_1,\ldots,X_n$ be drawn from a distribution $X\in\RR^p$ such that $\mean X=0$, $\Sigma=\mean XX^T$ is Toeplitz and $X$ satisfies the convex concentration property with constant $K$. Let $M\in\RR^{p\times p}$ be a Toeplitz mask  with first row $\omega$. Then the positive semidefinite estimator $\Sigma^*$ obtained from $M\cdot\tilde\Sigma_n$ by the procedure described above satisfies, for every $t>0$,
\begin{equation}\label{prob:pos-def}
\PP\left(\norm{\Sigma^*-\Sigma}\geq CK^2\brac{\norm{\omega}_{2,*}\sqrt{\frac{t}{n}}+\frac{\norm{\omega}_{1,*} t}{n}}+3\norm{f-f_{M\cdot \Sigma}}_\infty\right) \leq C p e^{-t},
\end{equation}
where $f$ and $f_{M\cdot \Sigma}$ denote spectral density functions of $\Sigma$ and $M\cdot\Sigma$ respectively. Moreover,
\begin{equation}\label{error:pos-def}
\mean\norm{\Sigma^*-\Sigma}\leq  CK^2\brac{\norm{\omega}_{2,*}\sqrt{\frac{\log(p)}{n}}+\norm{\omega}_{1,*}\frac{\log(p)}{n}}+3\norm{f-f_{M\cdot\Sigma}}_\infty,
\end{equation}
\end{theorem}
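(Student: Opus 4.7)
The plan is to bound $\norm{\Sigma^*-\Sigma}$ by reducing it to a supremum-norm estimate on the corresponding spectral density functions via the circulant extension machinery that has already been set up. The key observation is that both $\Sigma^*$ and $\Sigma$ are the top-left $p\times p$ restrictions of the $(2p-1)\times(2p-1)$ circulant matrices $\Sigma^*_{\circm}$ and $\Sigma_{\circm}$, so by the contractivity of restriction in spectral norm,
\[
\norm{\Sigma^*-\Sigma} \leq \norm{\Sigma^*_{\circm}-\Sigma_{\circm}} = \max_{|j|\leq p-1} \abs{f^*\brac{\tfrac{2\pi j}{2p-1}} - f\brac{\tfrac{2\pi j}{2p-1}}} \leq \|f^*-f\|_\infty,
\]
where the middle equality uses \eqref{eq:EigenValuesOfCirculantMatrix} applied to the circulant difference. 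This converts the task into bounding a supremum of trigonometric polynomials.

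Next, I would exploit positive semidefiniteness of $\Sigma$ (equivalently, $f\geq 0$) in a pointwise clipping argument for $|f^*(x)-f(x)|$. Splitting according to the sign of $f_{M\cdot\tilde\Sigma_n}(x)$: if $f_{M\cdot\tilde\Sigma_n}(x)\geq 0$ then $f^*(x)=f_{M\cdot\tilde\Sigma_n}(x)$ and the estimate is direct; if $f_{M\cdot\tilde\Sigma_n}(x)<0$ then $f^*(x)=0$, and $f(x)\leq f(x)-f_{M\cdot\tilde\Sigma_n}(x)$ since the subtracted quantity is negative. Writing $f(x)=f(x)-f_{M\cdot\Sigma}(x)+f_{M\cdot\Sigma}(x)$ and estimating each piece via the triangle inequality produces the factor $3$ appearing in \eqref{prob:pos-def}, yielding
\[
\|f^*-f\|_\infty \leq \|f_{M\cdot\tilde\Sigma_n}-f_{M\cdot\Sigma}\|_\infty + 3\,\|f_{M\cdot\Sigma}-f\|_\infty.
\]
The bias contribution is deterministic and already in the desired form.

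The probabilistic ingredient enters through the first term. To control $\|f_{M\cdot\tilde\Sigma_n}-f_{M\cdot\Sigma}\|_\infty$, I would invoke Theorem~\ref{th:MainResult}, using the fact (emphasized in the introduction as the main technical device of the paper) that the proof of Theorem~\ref{th:MainResult} actually produces a high-probability supremum-norm bound on the spectral density of $M\cdot(\tilde\Sigma_n-\Sigma)$, not merely a spectral norm bound on the Toeplitz matrix itself; the stated spectral norm estimate is derived from it via $\norm{A}\leq\|f_A\|_\infty$. Feeding that stronger bound into the display above yields \eqref{prob:pos-def}. The expectation estimate \eqref{error:pos-def} follows by routine integration of the tail inequality, exactly as sketched for \eqref{mean:bound}.

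The main delicate point is the need to appeal to the strengthened (supremum-norm) version of Theorem~\ref{th:MainResult} rather than its stated spectral-norm form; one cannot simply apply Theorem~\ref{th:MainResult} as a black box here, because no converse inequality of the form $\|f_A\|_\infty\lesssim\norm{A}$ is available for finite Toeplitz matrices $A$. Apart from this, the argument is a clean triangle-inequality decomposition combined with the clipping estimate and contractivity of restriction.
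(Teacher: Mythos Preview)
Your proposal is correct and follows essentially the same strategy as the paper: pass to the circulant extension, use the clipping inequality $\|f^*-f\|_\infty \le \|f_{M\cdot\tilde\Sigma_n}-f\|_\infty$ (valid because $f\ge 0$), split off the bias via a triangle inequality, and finally invoke the $L^\infty$ bound on $f_{M\cdot\tilde\Sigma_n}-f_{M\cdot\Sigma}$ that is established as \eqref{prob:SpectralDensityBound} inside the proof of Theorem~\ref{th:MainResult}; your remark that one needs this sup-norm bound rather than the bare spectral-norm statement of Theorem~\ref{th:MainResult} is exactly the point.

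There is one organisational difference worth noting. The paper first splits $\|\Sigma^*-\Sigma\|\le\|\Sigma^*-M\cdot\Sigma\|+\|M\cdot\Sigma-\Sigma\|$ and only passes to circulants on the first summand, then inserts $f$ and $f_{M\cdot\Sigma}$ twice more; this is where the three copies of $\|f-f_{M\cdot\Sigma}\|_\infty$ come from. Your route compares $\Sigma^*_{\circm}$ with $\Sigma_{\circm}$ directly, and the argument you sketch actually gives $\|f^*-f\|_\infty\le\|f_{M\cdot\tilde\Sigma_n}-f_{M\cdot\Sigma}\|_\infty+\|f_{M\cdot\Sigma}-f\|_\infty$, i.e.\ a single copy of the bias term, not three. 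So your displayed inequality with the factor $3$ is correct but needlessly generous: your decomposition in fact proves the theorem with $3$ replaced by $1$.
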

The term $\norm{f-f_{M\cdot \Sigma}}_\infty$ in \eqref{error:pos-def} replaces the bias term in \eqref{eq:BiasVariance} and is in general an upper bound for it. If the mask $M$ is chosen in such a way that $M\cdot\Sigma$ is precisely $\Sigma$ then the term $\norm{f-f_{M\cdot\Sigma}}_\infty$ in the estimate above disappears.  

In the situation of Toeplitz covariance matrices $\Sigma$ from the class $\mathcal{F}_\beta(L_0,L)$ from Section~\ref{sec:smooth} we have for the tapering mask $M_{\operatorname{tap}}$ with parameter $m$, see also \eqref{est:mtap},
\[
\norm{f-f_{M_{\operatorname{tap}} \cdot\Sigma}}_\infty \leq 12 L m^{-\beta}.
\]
Choosing $m$ as in \eqref{mtap:choice} and following the same steps leading to \eqref{tap:estimate}, 
we conclude that the corresponding positive definite estimator $\Sigma^*$ satisfies
\[
\mean \| \Sigma^* - \Sigma\| \leq CL \left( \frac{\log(p)}{np} \right)^{\frac{\beta}{2\beta+1}}.
\]
A corresponding tail estimate follows in the same way.
This means that the original masked estimator $M \cdot \tilde\Sigma_n$ and the positive semidefinite estimator 
$\Sigma^*$ obey the same error estimates on $\mathcal{F}_\beta(L_0,L)$ (up to possibly constants). 

\section*{Acknowledgements}

Both authors acknowledge funding from the DFG through the project Compressive Covariance Sampling for Spectrum Sensing (CoCoSa). They thank Andreas Bollig and Arash Behboodi for discussions on Toeplitz covariance estimation in the context of wireless communications.
 
\section{Preliminaries}
\subsection{The convex concentration property}

The Gaussian concentration inequality, see e.g.~\cite{Ledoux,LedouxTalagrand}, states that if $f:\RR^p\to\RR$ is a Lipschitz function with Lipschitz
constant  $\norm{f}_\text{Lip}$ and $X\sim\Gauss(0,\Sigma)$, then
\begin{equation}\label{eq:GaussianConcentration}
\PP\brac{\abs{f(X)-\mean f(X)}\geq t}\leq 2\exp\brac{-\frac{t^2}{2\norm{\Sigma}\norm{f}_{\text{Lip}}^2}}\quad\text{for all } t\geq 0.
\end{equation}
We are interested in distributions $X\in\RR^p$ that behave similar to (\ref{eq:GaussianConcentration}). 
\begin{definition}[Convex concentration property]\label{def:CCP}
Let $X$ be a random vector in $\RR^p$. We say that $X$ has the convex concentration property (c.c.p.) with constant $K$ if for every $1$-Lipschitz convex function $\phi:\RR^p\to\RR$, we have $\mean\abs{\phi(X)}<\infty$ and for every $t>0$,
\[
\PP(\abs{\phi(X)-\mean\phi(X)}\geq t)\leq 2\exp(-t^2/K^2).
\]
\end{definition}
The mean $\mean \phi(X)$ in \eqref{eq:GaussianConcentration} may be replaced by a median $M_f$ after possibly adjusting the constant $K$, see e.g.~\cite[Lemma 3.2]{Adamczak}. 

This type of distributions is considered in \cite{AdamczakLogSobolev,Adamczak,Ledoux,VuWang,MeckesSzarek}. We provide several examples of distributions with possibly dependent entries satisfying the c.c.p.:
\begin{enumerate}
\item Clearly, a Gaussian random vector $X\sim\Gauss(0,\Sigma)$ satisfies the c.c.p. with $K^2=2\norm{\Sigma}$.
\item A random vector $X$ that is uniformly distributed on the sphere $\sqrt{p}S^{p-1}$ satisfies the c.c.p.\ with constant $K=2$. This follows from \cite[eq.\ (5.7)]{Ledoux} combined with \cite[Theorem 5.3]{Ledoux}.
\item A random vector $X\in\RR^p$ with a density proportional to $e^{-u(x)}$, where the Hessian satisfies $D^2 u(x)\geq \gamma \Id$ for some $c>0$ uniformly in $x\in\RR^p$, has the c.c.p.~\cite[Proposition 2.18]{Ledoux} with constant $K= \sqrt{2/\gamma}$. Such random vectors form an important subclass of the logarithmically convex random vectors.
\item\label{it:IndependentEntries} A random vector $X = (X_1,\hdots,X_p)$ with independent components $X_j$ taking values in $[-1,1]$ satisfies the c.c.p.\ with absolute constant $K= c$ \cite{Talagrand:1995tn}. (Of course, the $X_j$ taking values in some other bounded intervals works as well after possibly adjusting the constant $c$.)
\item In generalization of the previous example, the c.c.p.\ also holds for certain random vectors $X = (X_1,\hdots,X_p)$ on $[-1,1]^p$ with dependent entries. In \cite{Samson:2000bf} this is proven for some classes of Markov chains and so-called $\Phi$-mixing processes. 

\item Let $X\in\RR^p$ be a random vector with covariance matrix being the identity and 
that satisfies the c.c.p.\ with constant $K$, for instance, a Rademacher vector, i.e., independent entries that take the value $\pm 1$ with equal probability (see Example~\ref{it:IndependentEntries}). Now for an arbitrary 
$B\in\RR^{q\times p}$ we define $Y = B X \in \RR^q$. Then $Y$ has covariance matrix $\Sigma_Y=BB^T$ and
satisfies the c.c.p. Indeed, for a $1$-Lipschitz and convex function $f:\RR^q\to\RR$, define 
$\phi:\RR^p\to\RR$ as $\phi(X)=\frac{1}{\norm{B}}f(BX)$. Then $\phi$ is also $1$-Lipschitz and convex.
Since $X$ has the c.c.p., we have 
\[
\mean\abs{f(Y)}=\mean\abs{f(BX)}=\norm{B}\mean\abs{\phi(X)}<\infty
\]
and 
\[
\PP(\abs{f(Y)-\mean f(Y)}\geq t)=\PP\brac{\abs{\phi(X)-\mean \phi(X)}\geq \frac{t}{\norm{B}}}\leq 2\exp(-t^2/(K\norm{B})^2),
\]
which implies that $Y$ satisfies the c.c.p. with constant $K\norm{\Sigma_Y}^{1/2}$.
This example shows in particular that any positive semidefinite matrix $\Sigma$ may appear as covariance matrix of
a random vector satisfying the c.c.p. and not being Gaussian.
\item It follows from \cite[Theorem 3.3]{Paulin:2014fj} (a generalization of Talagrand's convex distance inequality) that the c.c.p.\ holds for a random vector $X = (X_1,\hdots,X_p)$ with possibly dependent entries
which satisfies a Dobrushin type condition. See \cite{Paulin:2014fj} for details and \cite[Theorem 3]{Talagrand:1988ki} on how to deduce a concentration inequality from a convex distance inequality. 
This examples applies in particular to random vectors generated via sampling from finite sets without replacement \cite[Theorem 6.8]{Paulin:2014fj}. 
\item Random vectors satisfying the logarithmic Sobolev inequality are c.c.p.: For some positive measurable function $f$ on $\RR^p$, the entropy is defined as
\[
\operatorname{Ent}_{X}(f) = \mean[f(X) \log(f(X))] - \mean[f(X)] \log(\mean[f(X)]).
\]
The random vector is said to satisfy a logarithmic Sobolev inequality if for all smooth enough functions $f$ on $\RR^p$ it holds
\[
\operatorname{Ent}_{X}(f^2) \leq K^2 \mean\left[ \| \nabla f(X) \|_2^2\right]. 
\]
It follows from \cite[Theorem 5.3]{Ledoux} that $X$ has the c.c.p.\ with constant $K$.
Examples of random vectors satisfying the logarithmic Sobolev inequality include 
Gaussian random vectors $X\sim\Gauss(0,\Sigma)$ and more generally logarithmically concave random vectors as in Example 3.\ above, the uniform distribution on the sphere \cite[eq.\ (5.7)]{Ledoux} and, more generally,
random vectors distributed according to the normalized Riemann measure on a compact Riemannian manifold with Ricci curvature uniformly bounded from below by a positive constant \cite[eq.\ (5.5)]{Ledoux}.

\end{enumerate}
The following generalization of the Hanson-Wright inequality for random vectors satisfying the c.c.p.\ due to Adamczak \cite{Adamczak} is crucial for the proof of Theorem \ref{th:MainResult}.  
\begin{theorem}\label{th:HansonWright}
Let $X\in\RR^p$ be random with $\mean X=0$. If $X$ satisfies the c.c.p.\ with constant $K$, then for any $A\in\RR^{p\times p}$ and $t>0$,
\[
\PP\brac{|\abrac{AX,X}-\mean\abrac{AX,X}|\geq t}\leq 2\exp\brac{-\frac{1}{C}\min\brac{\frac{t^2}{2K^4\norm{A}_F^2},\frac{t}{K^2\norm{A}}}}.
\]
\end{theorem}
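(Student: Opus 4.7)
The plan follows Adamczak's strategy: combine the c.c.p.\ with a decoupling step to reduce the quadratic form $\langle AX,X\rangle$ to linear functionals of $X$, where the c.c.p.\ applies directly.

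First I make standard reductions. Replacing $A$ by its symmetric part $\tfrac12(A+A^T)$ leaves $\langle AX,X\rangle$ unchanged and does not increase $\|A\|_F$ or $\|A\|$, so I may assume $A$ is symmetric. Let $X'$ be an independent copy of $X$; then the joint vector $(X,X')$ in $\RR^{2p}$ inherits the c.c.p.\ with constant $O(K)$. I will first prove the Hanson--Wright tail for the decoupled bilinear form $\langle AX,X'\rangle$, then transfer it to the centered quadratic form.

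The bilinear bound comes from two nested applications of the c.c.p. Conditional on $X$, the map $x'\mapsto \langle AX,x'\rangle$ is linear with Lipschitz constant $\|AX\|_2$ and has conditional mean zero (since $\mean X'=0$), so the c.c.p.\ for $X'$ yields
\[
\PP\brac{|\langle AX,X'\rangle|\geq t\mid X}\leq 2\exp\brac{-t^2/(K^2\|AX\|_2^2)}.
\]
The map $x\mapsto\|Ax\|_2$ is itself convex and $\|A\|$-Lipschitz, so the c.c.p.\ for $X$ gives $\|AX\|_2\leq \mean\|AX\|_2 + sK\|A\|$ outside an event of probability $2e^{-s^2}$. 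Using $\mean\|AX\|_2\leq(\mean\|AX\|_2^2)^{1/2}=(\tr(A\Sigma A^T))^{1/2}\leq \|\Sigma\|^{1/2}\|A\|_F$, together with the observation that the c.c.p.\ with constant $K$ forces $\|\Sigma\|\leq CK^2$ (apply the c.c.p.\ to the 1-Lipschitz linear map $x\mapsto\langle v,x\rangle$ for every unit vector $v$), one obtains $\|AX\|_2\leq CK(\|A\|_F+s\|A\|)$ with the same probability. A union bound and optimization of $s$ versus $t$ (choosing $s$ to balance the two exponents in the subgaussian and subexponential regimes respectively) yields
\[
\PP\brac{|\langle AX,X'\rangle|\geq t}\leq C\exp\brac{-c\min\brac{\frac{t^2}{K^4\|A\|_F^2},\frac{t}{K^2\|A\|}}}.
\]

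Finally I transfer this to $Z:=\langle AX,X\rangle-\mean\langle AX,X\rangle$. Using $Z=\mean_{X'}[\langle AX,X\rangle-\langle AX',X'\rangle]$ and Jensen's inequality, the MGF of $Z$ is controlled by the MGF of $\langle AX,X\rangle-\langle AX',X'\rangle$, which for symmetric $A$ equals the bilinear chaos $\langle A(X-X'),X+X'\rangle$ in the pair $(U,V):=(X-X',X+X')$. Since $(X,X')$ enjoys the c.c.p.\ and the linear map $(X,X')\mapsto(U,V)$ is Lipschitz with universal constant, $(U,V)$ inherits the c.c.p., and the bilinear analysis above is applied conditionally on $V$ (with the roles of $X,X'$ replaced by $U$ conditioned on $V$, whose conditional concentration is extracted from the joint c.c.p.\ of $(U,V)$). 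A Chernoff argument then converts the resulting MGF bound into the stated tail estimate. The main obstacle is precisely this undecoupling: unlike the Gaussian case, $U=X-X'$ and $V=X+X'$ are not independent for a general c.c.p.\ vector, so the classical decoupling inequalities of Bourgain--Tzafriri / de la Pe\~na--Montgomery-Smith type, which assume independent coordinates, cannot be invoked directly. Adamczak's key observation is that Talagrand-type convex-distance inequalities applied to $(X,X')$ suffice to push the bilinear tail estimate to the full quadratic form at the cost of only universal constants, preserving the sharp $\|A\|_F$ scaling in the subgaussian regime.
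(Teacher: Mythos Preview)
The paper does not prove this theorem; it is stated in the preliminaries as a known result ``due to Adamczak'' with a citation, and is then invoked as a black box in the proof of Theorem~\ref{th:MainResult}. There is therefore no paper proof against which to compare your proposal.

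On the merits of your sketch itself: the treatment of the decoupled bilinear form $\langle AX,X'\rangle$ is essentially correct and is indeed one ingredient in Adamczak's argument. The undecoupling step, however, has a genuine gap. You write that the bilinear analysis is ``applied conditionally on $V$, whose conditional concentration is extracted from the joint c.c.p.\ of $(U,V)$.'' But the convex concentration property for a joint vector does \emph{not} in general imply the c.c.p.\ for one component conditionally on the other; this implication already fails for simple bounded dependent vectors, and it is precisely the obstruction you yourself flag a few lines earlier when noting that $U$ and $V$ are not independent. Saying that ``Talagrand-type convex-distance inequalities suffice'' does not close the gap, since in this setting Talagrand's inequality and the c.c.p.\ are essentially equivalent, so invoking one does not provide new conditional information. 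Adamczak's actual argument avoids any conditional-c.c.p.\ step: after the bilinear bound, the passage to the coupled quadratic form uses instead that for $A\succeq 0$ the map $x\mapsto\|A^{1/2}x\|_2=\sqrt{\langle Ax,x\rangle}$ is convex and $\|A\|^{1/2}$-Lipschitz, so the c.c.p.\ applies \emph{directly} to (the square root of) the quadratic form; combining this with the bilinear estimate and the decomposition $A=A_+-A_-$ is what recovers the sharp $\|A\|_F$ and $\|A\|$ parameters. You should consult Adamczak's paper for the precise bookkeeping.
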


\subsection{Sub-gamma random variables}

The proof of our main result uses the concept of sub-gamma random variables, see also \cite[Chapter 2.4]{BoucheronLugosiMassart}.
A real-valued mean-zero random variable $X$ is called sub-gamma with variance factor $\nu$ and scale parameter $c$ if, for all $0<\lambda<1/c$,
\begin{equation}\label{def:subgamma}
\mean \exp(\lambda X)  \leq \exp\left(\frac{\lambda^2 \nu}{2(1-c\lambda)}\right) \quad \mbox{ and } \quad 
\mean \exp(-\lambda X) \leq \exp\left( \frac{\lambda^2 \nu}{2(1-c\lambda)}\right) 
\end{equation}
The tail of a sub-gamma variable satisfies \cite[Chapter 2.4]{BoucheronLugosiMassart}
\begin{equation}\label{subgamma:tail}
\PP(|X| > \sqrt{2 \nu t} + \sqrt{c t}) \leq 2 \exp(-t). 
\end{equation}
Sub-gamma variables can be characterized via their moments \cite[Theorem 2.3]{BoucheronLugosiMassart}.
\begin{theorem}\label{thm:subgamma} 
If, for any integer $q \geq 1$, a random variable $X$ satisfies
\begin{equation}\label{char:subgamma}
\mean[X^{2q}] \leq q! A^q + (2q)! B^{2q}
\end{equation}
then $X$ is sub-gamma with variance factor $\nu = 4(A+B^2)$ and scale parameter $c= 2B$. 
\end{theorem}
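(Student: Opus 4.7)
The plan is to verify the two inequalities \eqref{def:subgamma} defining the sub-gamma property by controlling the moment generating function of $X$ via its Taylor expansion. Since $X$ has mean zero, the linear term vanishes and
\begin{equation*}
\mean[e^{\lambda X}] = 1 + \sum_{k \geq 2} \frac{\lambda^k \mean[X^k]}{k!}.
\end{equation*}
The hypothesis \eqref{char:subgamma} only controls even moments, so the first task is to dispose of the odd ones. A clean way is by symmetrization: letting $X'$ be an independent copy of $X$, Jensen's inequality applied to the convex function $t\mapsto e^{-\lambda t}$ together with $\mean[X']=0$ yields $\mean_{X'}[e^{-\lambda X'}]\geq 1$, and hence $\mean[e^{\lambda X}]\leq\mean[e^{\lambda(X-X')}]$. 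The random variable $X-X'$ is symmetric, so its MGF contains only even powers, and the triangle inequality in $L^{2q}$ gives $\mean[(X-X')^{2q}]\leq 4^q\mean[X^{2q}]$.

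Substituting \eqref{char:subgamma} reduces everything to summing
\begin{equation*}
\sum_{q\geq 1}\frac{(2\lambda)^{2q} q! A^q}{(2q)!}\quad\text{and}\quad \sum_{q\geq 1}(2\lambda B)^{2q}.
\end{equation*}
The second series is geometric and converges for $2B\lambda<1$. For the first, the elementary inequality $\binom{2q}{q}\geq 2^q$ implies $q!/(2q)!\leq 1/(2^q q!)$, so this sum is dominated by $\sum_{q\geq 1}(2\lambda^2 A)^q/q!=e^{2\lambda^2 A}-1$.

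The next step is to consolidate the resulting estimate into the canonical form $\exp\bigl(\lambda^2\nu/(2(1-c\lambda))\bigr)$. Factoring $1-(2\lambda B)^2=(1-2\lambda B)(1+2\lambda B)$ lets one replace $1-(2\lambda B)^2$ by the larger $1-2\lambda B$ in the denominator, and the elementary bound $e^a+b\leq e^{a+b}$ for $a,b\geq 0$ folds the exponential and geometric contributions into a single exponential of the desired shape. The analogous MGF bound for $-\lambda$ is immediate on applying the argument to $-X$, whose even moments coincide with those of $X$ and therefore satisfy the same hypothesis.

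The main technical obstacle is matching the constants to the precise values $\nu=4(A+B^2)$ and $c=2B$ claimed in the statement: the symmetrization above is slightly wasteful on the $B$-part (it essentially doubles the $B^2$ coefficient). To recover the announced constants one likely has to treat the odd moments directly via Cauchy--Schwarz, $|\mean[X^{2q+1}]|\leq\sqrt{\mean[X^{2q}]\mean[X^{2q+2}]}$, and then apply a weighted AM--GM step to absorb the neighbouring factorials into the even-moment series, at which point the telescoping reproduces the factor $(1-2B\lambda)^{-1}$ cleanly. Once an MGF bound of the correct shape is established, both defining inequalities in \eqref{def:subgamma} are in hand.
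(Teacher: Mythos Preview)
The paper does not prove this statement itself; it is quoted verbatim as \cite[Theorem~2.3]{BoucheronLugosiMassart}, so there is no in-paper argument to compare against.

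Your symmetrisation route is mathematically sound. Passing to $X-X'$, using $\mean[(X-X')^{2q}]\le 4^{q}\mean[X^{2q}]$, and then the bounds $q!/(2q)!\le 1/(2^{q}q!)$ and $e^{a}+b\le e^{a+b}$ is a clean way to get an MGF bound of the sub-gamma shape. Carrying the arithmetic through, one obtains
\[
\mean e^{\lambda X}\;\le\;\exp\!\left(\frac{2\lambda^{2}(A+2B^{2})}{1-2B\lambda}\right),\qquad 0<\lambda<\frac{1}{2B},
\]
so the scale parameter $c=2B$ comes out exactly as stated, while the variance factor is $\nu=4(A+2B^{2})$ rather than the announced $4(A+B^{2})$. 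You already diagnose this correctly: the symmetrisation doubles the $B^{2}$ contribution because the factor $4^{q}$ hits the $(2q)!B^{2q}$ term as well.

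Your proposed remedy of handling the odd moments directly via $|\mean X^{2q+1}|\le(\mean X^{2q}\,\mean X^{2q+2})^{1/2}$ is the right instinct and is essentially what the Boucheron--Lugosi--Massart argument does, but you have not actually carried it out, and the ``weighted AM--GM step to absorb the neighbouring factorials'' is where all the bookkeeping lives; as written this part is a promissory note rather than a proof. If you want to match the constants exactly you will need to do that computation; if you are content with the sub-gamma conclusion for \emph{some} absolute constants, your symmetrisation argument is already complete.

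For the purposes of the present paper this distinction is immaterial: Theorem~\ref{thm:subgamma} is only invoked in the proof of Theorem~\ref{th:MainResult} to feed into an estimate that carries an unspecified absolute constant $C$ throughout, so either value of $\nu$ yields the same conclusion.
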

Conversely, if $X$ is sub-gamma then \eqref{char:subgamma} holds for some $A$ and $B$.  

\section{Proof of main results}

\begin{proof}[Proof of Theorem \ref{th:MainResult}]
As in \cite{CaiRenZhou} our results rely on the connection between the spectral norm of the Toeplitz matrix and the $L^{\infty}$ norm of the corresponding spectral density function.

The spectral density function corresponding to a Toeplitz covariance matrix $\Sigma$ defined in \eqref{eq:ToeplitzMatrix} is given by
\[
f(x)=f_{\Sigma}(x)=\sigma_0+2\sum_{r=1}^{p-1}\sigma_r\cos rx=\sum\limits_{r=-(p-1)}^{p-1}\sigma_{\abs{r}}\cos rx,\quad x\in[-\pi,\pi]. 
\]
It follows from \cite[Chapter 5.2]{GrenanderSzegoe} that 
\begin{equation}\label{eq:SpectralNormMaxSpectralDensity}
\norm{\Sigma}\leq\norm{f}_{\infty}:=\sup\limits_{x\in[-\pi,\pi]}\abs{f(x)} \leq 2\max\limits_{1\leq k\leq 4p}\abs{f(x_k)} , \quad \mbox{ with } x_k = \frac{k-2p}{4p} \pi,
\end{equation}
where the second inequality follows from the fact that $f$ is a trigonometric polynomial of order less than $p$ together with  Theorem 7.28 in \cite[Chapter X]{Zygmund}.

Our masked estimator based on $n$ observations $X_1,\ldots,X_n$ of $X\in\RR^p$ is defined as
\[
M\cdot\tilde\Sigma_n=\begin{bmatrix}
\omega_0\tilde\sigma_0 & \omega_1\tilde\sigma_1 & \ldots & \omega_{p-1}\tilde\sigma_{p-1}\\
\omega_1\tilde\sigma_1 & \omega_0\tilde\sigma_0 &  & \vdots\\
\vdots &  & \ddots & \omega_1\tilde\sigma_1\\
\omega_{p-1}\tilde\sigma_{p-1} & \ldots & \omega_1\tilde\sigma_1 & \omega_0\tilde\sigma_0
\end{bmatrix}, 
\]
with
\[
 \tilde\sigma_r=\frac{1}{n}\sum_{i=1}^n\frac{1}{p-r}\sum_{j=1}^{p-r}X_{ij}X_{i(j+r)}, \quad r = 0,\hdots, p-1,
\] 
where $X_{ij}$ is the $j$th entry of the observation $X_i$. Then the corresponding spectral density function is given by
\begin{align}
f_{M\cdot\tilde\Sigma_n}(x)&=\sum_{r=-(p-1)}^{p-1}\omega_{\abs{r}}\tilde\sigma_{\abs{r}}\cos rx=\frac{1}{n}\sum_{i=1}^n\sum_{r=-(p-1)}^{p-1}\frac{\omega_{\abs{r}}}{p-\abs{r}}\sum_{j=1}^{p-\abs{r}}X_{ij}X_{i(j+\abs{r})}\cos rx\notag\\
&=\frac{1}{n}\sum_{i=1}^n\sum_{s=1}^p\sum_{t=1}^p\frac{\omega_{\abs{s-t}}}{p-\abs{s-t}}X_{is}X_{it}\cos(s-t)x=\frac{1}{n}\sum_{i=1}^n\abrac{M\cdot V^xX_i,X_i},\label{eq:SpectralDensityQuadraticFormRepresentation}
\end{align}
where 
\[
V^x=[v_{st}^x]_{s,t=1}^p,\quad v_{st}^x=v_{\abs{s-t}}^x=\frac{\cos(s-t)x}{p-\abs{s-t}},\quad x\in[-\pi,\pi].
\]
Let $Z_i^k$ be the mean-zero random variable defined by
\[
Z_i^k=\abrac{M\cdot V^{x_k}X_i, X_i}-\mean\abrac{M\cdot V^{x_k}X_i, X_i},\quad i=1,\dots,n, \; k=1,\hdots,p.
\]
Then (\ref{eq:SpectralNormMaxSpectralDensity}) and  (\ref{eq:SpectralDensityQuadraticFormRepresentation}) together with the notation above provide the following bound 
\begin{equation}\label{eq:Discretization}
\PP\brac{\norm{f_{M\cdot\tilde\Sigma_n}-f_{M\cdot\Sigma}}_{\infty}\geq t}\leq \PP\brac{\max\limits_{1\leq k\leq 4p}\abs{\frac{1}{n}\sum_{i=1}^nZ_i^k}\geq \frac{t}{2}}.
\end{equation}
By the generalized Hanson-Wright inequality of Theorem \ref{th:HansonWright}, for each $i=1,\hdots,n$ and $k=1,\hdots,4p$, 
\[
\PP\brac{\abs{Z_i^k}\geq t}\leq 2\exp\brac{-\frac{1}{C}\min\brac{\frac{t^2}{2K^4\norm{M\cdot V^{x_k}}_F^2},\frac{t}{K^2\norm{M\cdot V^{x_k}}}}},
\]
which by integration implies that for every integer $q\geq 1$,
\begin{align}
\mean\abs{Z_i^k}^{2q}&\leq 2q\brac{2CK^4\norm{M\cdot V^{x_k}}_F^2}^{q}\Gamma(q)+4q\brac{CK^2\norm{M\cdot V^{x_k}}}^{2q}\Gamma(2q)\notag\\
&\leq q!\brac{4CK^4\norm{M\cdot V^{x_k}}_F^2}^q +(2q)!\brac{2CK^2\norm{M\cdot V^{x_k}}}^{2q}. \label{eq:SubGammaMoments}
\end{align}
According to 
Theorem~\ref{thm:subgamma} it follows that $Z_i^k$ is a sub-gamma random variable with variance factor 
\[
\nu = 16 K^4\brac{C \norm{M\cdot V^{x_k}}_F^2+ C^2\norm{M\cdot V^{x_k}}^2}
\]
and scale parameter 
\[
c = 2CK^2\norm{M\cdot V^{x_k}}.
\]
Hence, by \eqref{def:subgamma} and independence, for all $0 < \lambda < 1/c$,
\begin{equation}\label{eq:SumOfSubGammaRV}
\mean \exp\brac{\lambda \sum_{i=1}^n Z_i^k}=\prod_{i=1}^n\mean\exp\brac{\lambda Z_i^k}\leq\exp\brac{\frac{\lambda^2n\nu}{2(1-c\lambda)}},
\end{equation}
and similarly for the $Z_i^k$ replaced by $- Z_i^k$.
This means that $\displaystyle\sum_{i=1}^n Z_i^k$ is a sub-gamma random variable with variance factor $\nu n$ and scale parameter $c$. 
By \eqref{subgamma:tail} 
this implies that for every $t>0$,
\[
\PP\brac{\abs{\sum_{i=1}^nZ_i^k}>\sqrt{2\nu n t}+ct}\leq 2e^{-t}.
\]
Taking into account that the spectral norm of a matrix is bounded from above by its Frobenius norm we obtain
\[
\PP\brac{\abs{\frac{1}{n}\sum_{i=1}^n Z_i^k}\geq C_1K^2\norm{M\cdot V^{x_k}}_F\sqrt\frac{t}{n}+\frac{CK^2\norm{M\cdot V^{x_k}}t}{n}}\leq 2e^{-t} \quad \mbox{ for all } t > 0,
\]
where $C_1 = 4\sqrt{2C + 2 C^2}$.
A direct calculation of $\norm{M\cdot V^{x_k}}_F$ yields
\[
\norm{M\cdot V^{x_k}}_F\leq \brac{2\sum_{\ell=0}^{p-1}\frac{\omega_{\ell}^2}{p-\ell}}^{1/2}= 2 \|\omega\|_{2,*}.
\]
By the Gershgorin disc theorem \cite[Chapter 6]{HornJohnson}, $\norm{M\cdot V^{x_k}}$ is bounded by
\[
\norm{M\cdot V^{x_k}}\leq 2\sum_{\ell=0}^{p-1}\frac{\omega_{\ell}}{p-\ell}= 2 \|\omega\|_{1,*}.
\]
Applying the union bound to (\ref{eq:Discretization}) results in
\begin{equation}\label{prob:SpectralDensityBound}
\PP\brac{\norm{f_{M\cdot\tilde\Sigma_n}-f_{M\cdot\Sigma}}_{\infty}\geq C_2 K^2\brac{\|\omega\|_{2,*} \sqrt\frac{t}{n}+\frac{\|\omega\|_{1,*} t}{n}}}\leq 8pe^{-t}
\end{equation}
for some $C_2$ only depending on $C$. Due to (\ref{eq:SpectralNormMaxSpectralDensity}) the error of approximating $M\cdot \Sigma$ by $M\cdot\tilde\Sigma_n$ is bounded by
\[
\PP\brac{\norm{M\cdot\tilde\Sigma_n-M\cdot\Sigma}\geq C_2 K^2\brac{\|\omega\|_{2,*} \sqrt\frac{t}{n}+\frac{\|\omega\|_{1,*} t}{n}}}\leq 8pe^{-t}.
\]
Integration yields
\begin{align}
&\mean\norm{f_{M\cdot\tilde\Sigma_n}-f_{M\cdot\Sigma}}_{\infty}\leq C_3 K^2\brac{\|\omega\|_{2,*}\sqrt{\frac{\log(p)}{n}}+\|\omega\|_{1,*}\frac{\log(p)}{n}},\label{eq:MeanSpectralDensityBound}\\
&\mean\norm{M\cdot\tilde\Sigma_n-M\cdot\Sigma}\leq C_3 K^2\brac{\|\omega\|_{2,*}\sqrt{\frac{\log(p)}{n}}+\|\omega\|_{1,*}\frac{\log(p)}{n}}\notag.
\end{align}
This concludes the proof. 
\end{proof}

\begin{proof}[Proof of Corollary \ref{cor:Gaussian}]
The Gaussian distribution satisfies the c.c.p with the constant $K^2=2\norm{\Sigma}$. Since the entries of either the banding or tapering mask $M$ are bounded from above by $1$ and $m\leq\frac{p}{2}$, we obtain
\[
\begin{aligned}
&\|\omega\|_{2,*} \leq \brac{2\sum_{\ell=0}^{m}\frac{1}{p-\ell}}^{1/2}\leq\brac{\frac{2(m+1)}{p-m}}^{1/2}\leq\brac{\frac{4(m+1)}{p}}^{1/2},\\
&\|\omega\|_{1,*}\leq 2\sum_{\ell=0}^{m}\frac{1}{p-\ell}\leq \frac{2(m+1)}{p-m}\leq\frac{4(m+1)}{p}.
\end{aligned}
\]
Theorem \ref{th:MainResult} yields the claim. 
\end{proof}

\begin{proof}[Proof of Theorem \ref{th:Perfomance OfPositiveEstimator}]
By the triangle inequality,
\begin{equation}\label{eq:PosEstTriangleIneq}
\norm{\Sigma^*-\Sigma}\leq\norm{\Sigma^*-M\cdot\Sigma}+\norm{M\cdot\Sigma-\Sigma}.
\end{equation}
We bound the first term by expanding both matrices to a circulant matrix and taking into account expression (\ref{eq:EigenValuesOfCirculantMatrix}) for its eigenvalues,  
\begin{align}
\norm{\Sigma^*-M\cdot\Sigma}&\leq\norm{\Sigma^*_{\circm}-(M\cdot\Sigma)_{\circm}}=\underset{\abs j\leq p-1}\max\abs{f^*\brac{\frac{2\pi j}{2p-1}}-f_{M\cdot\Sigma}\brac{\frac{2\pi j}{2p-1}}}\notag\\
&\leq\norm{f^*-f_{M\cdot\Sigma}}_{\infty}\leq\norm{f^*-f}_{\infty}+\norm{f-f_{M\cdot\Sigma}}_{\infty}\label{eq:PosEstDifferencePosEstAndMaskedCovariance}.
\end{align}
Since $f$ is non-negative and $f^*$ is the positive part of $f_{M\cdot\tilde\Sigma_n}$, 
\begin{equation}\label{eq:PosEstSpectralDensityPositivePart}
\norm{f^*-f}_{\infty}\leq\norm{f_{M\cdot\tilde\Sigma_n}-f}_{\infty}\leq \norm{f_{M\cdot\tilde\Sigma_n}-f_{M\cdot\Sigma}}_{\infty}+\norm{f_{M\cdot\Sigma}-f}_{\infty}.
\end{equation}
Estimating the second term of (\ref{eq:PosEstTriangleIneq}) by the $L^{\infty}$ norm of the corresponding spectral density function together with (\ref{eq:PosEstDifferencePosEstAndMaskedCovariance}) and  (\ref{eq:PosEstSpectralDensityPositivePart}) leads to
\[
\norm{\Sigma^*-\Sigma}\leq \norm{f_{M\cdot\tilde\Sigma_n}-f_{M\cdot\Sigma}}_{\infty}+3 \norm{f-f_{M\cdot\Sigma}}_{\infty}.
\] 
Taking expectations and applying estimate (\ref{eq:MeanSpectralDensityBound}) shows the claimed estimate for the expectation of the approximation error, while a combination with \eqref{prob:SpectralDensityBound} proves the probability bound.
\end{proof}

\bibliographystyle{plain}
\bibliography{ToeplitzRef}

\begin{thebibliography}{10}

\bibitem{AdamczakLogSobolev}
R.~Adamczak.
\newblock Logarithmic {S}obolev inequalities and concentration of measure for
  convex functions and polynomial chaoses.
\newblock {\em Bull. Pol. Acad. Sci. Math.}, 53(2):221--238, 2005.

\bibitem{Adamczak}
R.~Adamczak.
\newblock A note on the {H}anson-{W}right inequality for random vectors with
  dependencies.
\newblock {\em Electron. Commun. Probab.}, 20(72):1--13, 2015.

\bibitem{BickelLevinaThresholding}
P.~Bickel and E.~Levina.
\newblock Covariance regularization by thresholding.
\newblock {\em Ann. Stat.}, 36(6):2577--2604, 2008.

\bibitem{BickelLevina}
P.~Bickel and E.~Levina.
\newblock Regularized estimation of large covariance matrices.
\newblock {\em Ann. Stat.}, 36(1):199--227, 2008.

\bibitem{BoucheronLugosiMassart}
S.~Boucheron, G.~Lugosi, and P.~Massart.
\newblock {\em Concentration inequalities. {A} nonasymptotic theory of
  independence}.
\newblock Oxford University Press, Oxford, 2013.

\bibitem{CaiRenZhou}
{T.} Cai, {Z.} Ren, and {H.} Zhou.
\newblock Optimal rates of convergence for estimating {T}oeplitz covariance
  matrices.
\newblock {\em Probab. Theory Relat. Fields}, 156(1-2):101--143, 2013.

\bibitem{CaiZhou}
T.~Cai and H.~Zhou.
\newblock Minimax estimation of large covariance matrices under
  {$\ell_1$}-norm.
\newblock {\em Stat. Sin.}, 22(4):1319--1349, 2012.

\bibitem{chaudhari}
S.~Chaudhari, V.~Koivunen, and H.~V. Poor.
\newblock Autocorrelation-based decentralized sequential detection of {OFDM}
  signals in cognitive radios.
\newblock {\em IEEE Trans. Signal Process.}, 57(7):2690--2700, 2009.

\bibitem{ChenGittensTropp}
R.~Chen, A.~Gittens, and J.~Tropp.
\newblock The masked sample covariance estimator: an analysis using matrix
  concentration inequalities.
\newblock {\em Inf. Inference}, 1(1):2--20, 2012.

\bibitem{ElKaroui}
N.~El~Karoui.
\newblock Operator norm consistent estimation of large-dimensional sparse
  covariance matrices.
\newblock {\em Ann. Stat.}, 36(6):2717--2756, 2008.

\bibitem{FurrerBengtsson}
R.~Furrer and T.~Bengtsson.
\newblock Estimation of high-dimensional prior and posterior covariance
  matrices in {K}alman filter variants.
\newblock {\em J. Multivariate Anal.}, 98(2):227--255, 2007.

\bibitem{GrenanderSzegoe}
{U.} Grenander and {G.} Szeg{\"o}.
\newblock {\em Toeplitz forms and their applications}.
\newblock California Monographs in Mathematical Sciences. University of
  California Press, Berkeley-Los Angeles, 1958.

\bibitem{HornJohnson}
R.~Horn and C.~Johnson.
\newblock {\em Matrix analysis}.
\newblock Cambridge University Press, Cambridge, 2013.

\bibitem{Ledoux}
M.~Ledoux.
\newblock {\em The concentration of measure phenomenon}, volume~89 of {\em
  Mathematical Surveys and Monographs}.
\newblock American Mathematical Society, Providence, RI, 2001.

\bibitem{LedouxTalagrand}
M.~Ledoux and M.~Talagrand.
\newblock {\em Probability in {B}anach spaces. {I}soperimetry and processes}.
\newblock Springer-Verlag, Berlin, 2011.

\bibitem{LevinaVershynin}
E.~Levina and R.~Vershynin.
\newblock Partial estimation of covariance matrices.
\newblock {\em Probab. Theory Relat. Fields}, 153(3-4):405--419, 2012.

\bibitem{MeckesSzarek}
M.~Meckes and S.~Szarek.
\newblock Concentration for noncommutative polynomials in random matrices.
\newblock {\em Proc. Amer. Math. Soc.}, 140(5):1803--1813, 2012.

\bibitem{Paulin:2014fj}
D.~Paulin.
\newblock {The convex distance inequality for dependent random variables, with
  applications to the stochastic travelling salesman and other problems}.
\newblock {\em Electron. J. Probab.}, 19(0):1--34, 2014.

\bibitem{rauhutward}
H.~Rauhut and R.~Ward.
\newblock Interpolation via weighted $\ell_1$-minimization.
\newblock {\em Appl. Computat. Harmonic Anal.}, 40(2):321--351, 2016.

\bibitem{Samson:2000bf}
P.~Samson.
\newblock {Concentration of measure inequalities for Markov chains and
  $\Phi$-mixing processes}.
\newblock {\em Ann. Prob.}, 28(1):416--461, 2000.

\bibitem{Talagrand:1988ki}
M.~Talagrand.
\newblock {An isoperimetric theorem on the cube and the Kintchine-Kahane
  inequalities}.
\newblock {\em Proc. Am. Math. Soc.}, 104(3):905--909, 1988.

\bibitem{Talagrand:1995tn}
M.~Talagrand.
\newblock {Concentration of measure and isoperimetric inequalities in product
  spaces}.
\newblock {\em Inst. Hautes {\'E}tudes Sci. Publ. Math.}, (81):73--205, 1995.

\bibitem{Vershynin}
{R.} Vershynin.
\newblock Introduction to the non-asymptotic analysis of random matrices.
\newblock In {\em Compressed sensing, {T}heory and {A}pplications}, pages
  210--268. Cambridge Univ. Press, Cambridge, 2012.

\bibitem{VuWang}
V.~Vu and K.~Wang.
\newblock Random weighted projections, random quadratic forms and random
  eigenvectors.
\newblock {\em Random Struct. Algor.}, 47(4):792--821, 2015.

\bibitem{ZengLiang}
Y.~Zeng and Y.~C. Liang.
\newblock Covariance based signal detections for cognitive radio.
\newblock In {\em 2007 2nd IEEE International Symposium on New Frontiers in
  Dynamic Spectrum Access Networks}, pages 202--207, April 2007.

\bibitem{Zygmund}
A.~Zygmund.
\newblock {\em Trigonometric series}.
\newblock Cambridge Mathematical Library. Cambridge University Press,
  Cambridge, 2002.

\end{thebibliography}


\end{document}